\newsavebox{\tallguy}
\savebox{\tallguy}{\mbox{\rule{0ex}{2.25ex}}}
\newcommand{\tr}{\mbox{Tr} \, }
\newcommand{\ket}[1]{ \usebox{\tallguy} \left | #1 \right \rangle}
\newcommand{\bra}[1]{ \left \langle #1 \right | \usebox{\tallguy}}
\newcommand{\amp}[2]{
    \left \langle #1 \left | #2 \right. \right \rangle}
\newcommand{\proj}[1]{\ket{#1} \! \bra{#1}}
\newcommand{\outerprod}[2]{\ket{#1} \! \bra{#2}}
\newcommand{\hilbert}{\mathcal{H}}
\newcommand{\qubitspace}{\mathcal{Q}}
\newcommand{\tictacspace}{\mathcal{T}}
\newcommand{\absolute}[1]{\left | #1 \right |}
\newcommand{\oper}[1]{\boldsymbol{#1}}
\newcommand{\paulix}{\boldsymbol{X}}
\newcommand{\pauliy}{\boldsymbol{Y}}
\newcommand{\pauliz}{\boldsymbol{Z}}
\newcommand{\idop}{\boldsymbol{1}}
\newcommand{\sys}[1]{^{\mbox{\tiny (#1)}}}
\newcommand{\subtext}[1]{_{\mbox{\tiny #1}}}
\theoremstyle{definition}
\newtheorem{theorem}{Theorem}
\newtheorem{lemma}{Lemma}
\begin{document}

\title{Quantum meronomic frames}
\author{Austin Hulse}
\affiliation{Department of Physics, Kenyon College, Gambier, OH 43022 and\\
Department of Physics, Duke University, Durham, NC 27708}

\author{Benjamin Schumacher\footnote{Corresponding author.}}
\affiliation{Department of Physics, Kenyon College, Gambier, OH 43022}
\email[Corresponding author:  ]{schumacherb@kenyon.edu}

\begin{abstract}
Composite quantum systems can be decomposed into subsystems in many
different inequivalent ways.  We call a particular decomposition a {\em meronomic
reference frame} for the system.  We apply the ideas of quantum reference
frames to characterize meronomic frames, identify tasks that require such
frames to accomplish, and show how asymmetric quantum states can be
used to embody meronomic frame information.
\end{abstract}

\maketitle

\section{Introduction}  \label{sec-introduction}

Consider a quantum system that is a composite of two subsystems,
designated 1 and 2.  The Hilbert space for the system is
the tensor product space $\hilbert = \hilbert\sys{1} \otimes 
\hilbert\sys{2}$, so that $\dim\hilbert  = \dim\hilbert\sys{1} \cdot \dim\hilbert\sys{2}$.
Given orthonormal bases $\{ \ket{k\sys{1}} \}$ for subsystem 1
and $\{ \ket{n\sys{2}} \}$ for subsystem 2, we can form the product
basis $\{ \ket{k\sys{1}, n\sys{2}} \}$.  
A given composite state $\ket{\Psi\sys{12}}$
is a product state provided
$\Psi\sys{12}_{kn} = \psi\sys{1}_{k} \phi\sys{2}_{n}$ 
for subsystem amplitudes
$\psi\sys{1}_{k}$ and $\phi\sys{2}_{n}$.
If no such subsystem amplitudes exist, the 
state $\ket{\Psi\sys{12}}$ is entangled.

However, any given composite system can be decomposed into 
subsystems in many different ways.  We may begin with any
orthonormal basis of $\dim\hilbert$ states in $\hilbert$ and
then assign each basis state a label $\ket{ k\sys{A}, n\sys{B} }$.
This assignment defines a decomposition of the system into subsystems
A and B, which might be quite different from the original subsystems
1 and 2.  Product and entangled states of the (A,B) decomposition
can be identified as before from the $\Psi\sys{AB}_{kn}$ 
amplitudes.

A particular division of a composite system into subsystems
constitutes a kind of ``reference frame'' for the system, which
we call a {\em meronomic} reference frame.  (The word
derives from the Greek word {\em meros} meaning ``part 
of a whole''.)  As we have seen, a product basis specifies a
particular meronomic frame.  But of course, the same subsystem
decomposition is associated with many different product bases.
Thus, assigning a product basis is sufficient but not necessary to specify
a meronomic frame.  We will give a more precise characterization
of meronomic frames in Section~\ref{sec-frames} below.

To give a simple and striking example,\footnote{Suggested 
to us by Ronan Plesser.} we begin with two qubits 1 and 2,
each with standard $\{ \ket{0}, \ket{1} \}$ bases.  From the
resulting product basis, we define the maximally entangled
Bell states:
\begin{eqnarray}  \label{eq-bellstates}
	\ket{\Phi_{\pm}\sys{12}} 
		& = & \frac{1}{\sqrt{2}} \left (\ket{0\sys{1},0\sys{2}} \pm \ket{1\sys{1},1\sys{2}} \right ) \nonumber \\
	\ket{\Psi_{\pm}\sys{12}} 
		& = & \frac{1}{\sqrt{2}} \left (\ket{0\sys{1},1\sys{2}} \pm \ket{1\sys{1},0\sys{2}} \right ).
\end{eqnarray}
The Bell states are themselves an orthonormal basis.  We can 
define an alternate meronomic frame---a decomposition of the
system into alternate qubits A and B---by 
identifying these states as a product basis in the new
decomposition.  We suppose qubit A has basis 
$\{ \ket{\Phi\sys{A}}, \ket{\Psi\sys{A}} \}$ and
qubit B has basis $\{ \ket{+\sys{B}}, \ket{-\sys{B}} \}$, 
so that
\begin{eqnarray}  \label{eq-bellproducts}
	\ket{\Phi_{+}\sys{12}} & = & \ket{\Phi\sys{A}} \otimes \ket{+\sys{B}} \nonumber \\
	\ket{\Phi_{-}\sys{12}} & = & \ket{\Phi\sys{A}} \otimes \ket{-\sys{B}} \nonumber \\
	\ket{\Psi_{+}\sys{12}} & = & \ket{\Psi\sys{A}} \otimes \ket{+\sys{B}} \nonumber \\
	\ket{\Psi_{-}\sys{12}} & = & \ket{\Psi\sys{A}} \otimes \ket{-\sys{B}} .
\end{eqnarray}
The Bell states, which are entangled in the (1,2) decomposition,
are product states in the (A,B) decomposition.
Conversely, the state
\begin{eqnarray}
	\ket{\Upsilon} & = &
		\frac{1}{2} \left ( \ket{0\sys{1}} + i\ket{1\sys{1}}\right ) \otimes \left ( \ket{0\sys{2}} - i\ket{1\sys{2}}\right )
		\nonumber \\
	& = & \frac{1}{\sqrt{2}} \left ( \ket{\Phi\sys{A}, + \sys{B}} - i \ket{\Psi\sys{A}, -\sys{B}} \right )
\end{eqnarray}
is a product state in the (1,2) frame but entangled in the (A,B) frame.

Given subsystem basis states for the A and B qubits, we can define associated Pauli
operators.  For example, $\pauliz\sys{A} = \proj{\Phi\sys{A}} - \proj{\Psi\sys{A}}$,
$\paulix\sys{B} = \ket{+\sys{B}}\bra{-\sys{B}} + \ket{-\sys{B}}\bra{+\sys{B}}$, and so on.  
This yields
\begin{equation} \label{eq-ABpauliops}
	\begin{array}{l}
	\begin{array}{l}
		\paulix\sys{A} \otimes \idop\sys{B} = \idop\sys{1} \otimes \paulix\sys{2}\\
		\pauliy\sys{A} \otimes \idop\sys{B} = \pauliz\sys{1} \otimes \pauliy\sys{2} \\ 
		\pauliz\sys{A} \otimes \idop\sys{B} = \pauliz\sys{1} \otimes \pauliz\sys{2} 
	\end{array}\\ \quad \\
	\begin{array}{l}
		\idop\sys{A} \otimes \paulix\sys{B} = \pauliz\sys{1} \otimes \idop\sys{2} \\
		\idop\sys{A} \otimes \pauliy\sys{B} = - \pauliy\sys{1} \otimes \paulix\sys{2} \\
		\idop\sys{A} \otimes \pauliz\sys{B} = \paulix\sys{1} \otimes \paulix \sys{2} 
	\end{array}
	\end{array}
\end{equation}
It is easy to confirm that these have the usual algebraic relations for 
Pauli operators on a pair of distinct qubits.  In fact, the specification of 
these operators defines the qubit decomposition\cite{zll}.

When we shift to a new meronomic frame, product states may 
become entangled and {\em vice versa}.  
But why would we choose to make such a shift?  One reason is that
the dynamics of a composite system may appear simpler in one frame than in
another.  Suppose qubits 1 and 2 represent some easily identified physical systems,
such as a pair of particle spins.  These spins interact according to a Hamiltonian
\begin{equation} \label{eq-12hamiltonian}
	\oper{H} = \alpha \, \pauliz\sys{1} \otimes \pauliz\sys{2} 
		+ \beta \, \paulix\sys{1} \otimes \paulix\sys{2} , 
\end{equation}
for some parameters $\alpha$ and $\beta$.  From Equation~\ref{eq-ABpauliops}
we recognize that this can also be written
\begin{equation}  \label{eq-ABhamiltonian}
	\oper{H} = \alpha \, \pauliz\sys{A} \otimes \idop\sys{B} 
			+ \beta \, \idop\sys{A} \otimes \pauliz\sys{B} .
\end{equation}
In the (A,B) meronomic frame, the overall Hamiltonian is the sum 
of independent subsystem Hamiltonians.  That is, 
although the spins 1 and 2 are interacting, we may equally
well regard the system as a composite of non-interacting qubits 
A and B.

This general idea is familiar from textbook quantum mechanics
\cite{qpsi}.
In a system of two interacting particles (such as a hydrogen
atom), the Hamiltonian includes kinetic energy terms for
both particles and an interaction potential:
\begin{equation}
	\oper{H} = \frac{1}{2 m_{1}} \oper{p}_{1}^2 + \frac{1}{2 m_{2}} \oper{p}_{2}^2 
		+ U(\vec{\oper{r}}_{1} - \vec{\oper{r}}_{2}) .
\end{equation}
We now introduce center of mass and relative position coordinates
\begin{equation}
	\vec{\oper{R}}\subtext{CM} = \frac{1}{M} \left ( m_{1} \vec{\oper{r}}_{1}  + m_{2} \vec{\oper{r}}_{2} \right )
	\quad \mbox{and} \quad \vec{\oper{r}}\subtext{rel} = \vec{\oper{r}}_{1} - \vec{\oper{r}}_{2} ,
\end{equation}
so that the system Hamiltonian becomes
\begin{equation}
	\oper{H} = \frac{1}{2M} \oper{P}\subtext{CM}^2 + \frac{1}{2\mu} \oper{p}\subtext{rel}^{2} 
		+ U(\vec{\oper{r}}\subtext{rel}) ,
\end{equation}
where the total mass $M = m_{1} + m_{2}$ and the relative mass $\mu = m_{1} m_{2} / M$.  
The $\vec{\oper{R}}\subtext{CM}$ and $\vec{\oper{r}}\subtext{rel}$ degrees 
of freedom are dynamically independent.  
The energy eigenstates of the system are product states
\begin{equation}  \label{eq-Hatomproduct}
\ket{\Psi} = \ket{\psi\subtext{CM}} \otimes \ket{\phi\subtext{rel}} .
\end{equation}
For a hydrogen atom, these would be plane wave states
for the center of mass along with the usual bound and unbound states
of the electron relative to the nucleus.  But these eigenstates are
{\em entangled} states of the two original particles.  The change
of coordinates includes a change of meronomic frame for the
composite system.

The distinction between meronomic frames may be
quite subtle.  Once more we start with qubits 1 and 2, and define our new
product basis for qubits A and B via
\begin{eqnarray}
	\ket{0\sys{A}, 0\sys{B}} & = & \ket{0\sys{1}, 0\sys{2}} \nonumber \\ 
	\ket{0\sys{A}, 1\sys{B}} & = & \ket{0\sys{1}, 1\sys{2}} \nonumber \\ 
	\ket{1\sys{A}, 0\sys{B}} & = & \ket{1\sys{1}, 0\sys{2}} \nonumber \\ 
	\ket{1\sys{A}, 1\sys{B}} & = & e^{i \theta} \ket{1\sys{1}, 1\sys{2}} , \label{eq-thetaframe}
\end{eqnarray}
where $\theta \in [0,2 \pi)$.  This single change of phase looks innocent, 
but it is not.  Even though all of the (A,B) basis states 
are product {\em states} in the (1,2) decomposition, the collection does not
constitute a (1,2) product {\em basis}---a basis assembled from tensor products
of independent subsystem basis states---unless $\theta = 0$.
The (A,B) decomposition is a different meronomic frame.
We can see this by considering the state
\begin{eqnarray}
	\ket{\Psi} & = & \left (\frac{1}{\sqrt{2}} ( \ket{0\sys{1}} + \ket{1\sys{1}} ) \right ) 
		\nonumber \\ & & \quad
		\otimes \left (\frac{1}{\sqrt{2}} ( \ket{0\sys{2}} + \ket{1\sys{2}} ) \right ) \nonumber \\
		& = & \frac{1}{2} \left ( \ket{0\sys{1}, 0\sys{2}} +\ket{0\sys{1}, 1\sys{2}} \right .
		\nonumber \\ & & \quad
		\left . + \ket{1\sys{1}, 0\sys{2}}  + \ket{1\sys{1}, 1\sys{2}}  \right ).
\end{eqnarray}
This is obviously a (1,2) product state.  For the A and B qubits, however,
\begin{eqnarray}
	\ket{\Psi} & = & \ket{0\sys{A}} \otimes \left ( \frac{1}{2} ( \ket{0\sys{B}} + \ket{1\sys{B}} ) \right )
			\nonumber \\ & & \quad
				+  \ket{1\sys{A}} \otimes \left ( \frac{1}{2} ( \ket{0\sys{B}} + e^{-i\theta} \ket{1\sys{B}} ) \right ) ,
\end{eqnarray}
which is entangled for all $\theta \neq 0$, and maximally entangled for $\theta = \pi$.

\begin{figure} 
\begin{center}
\includegraphics[width=3in]{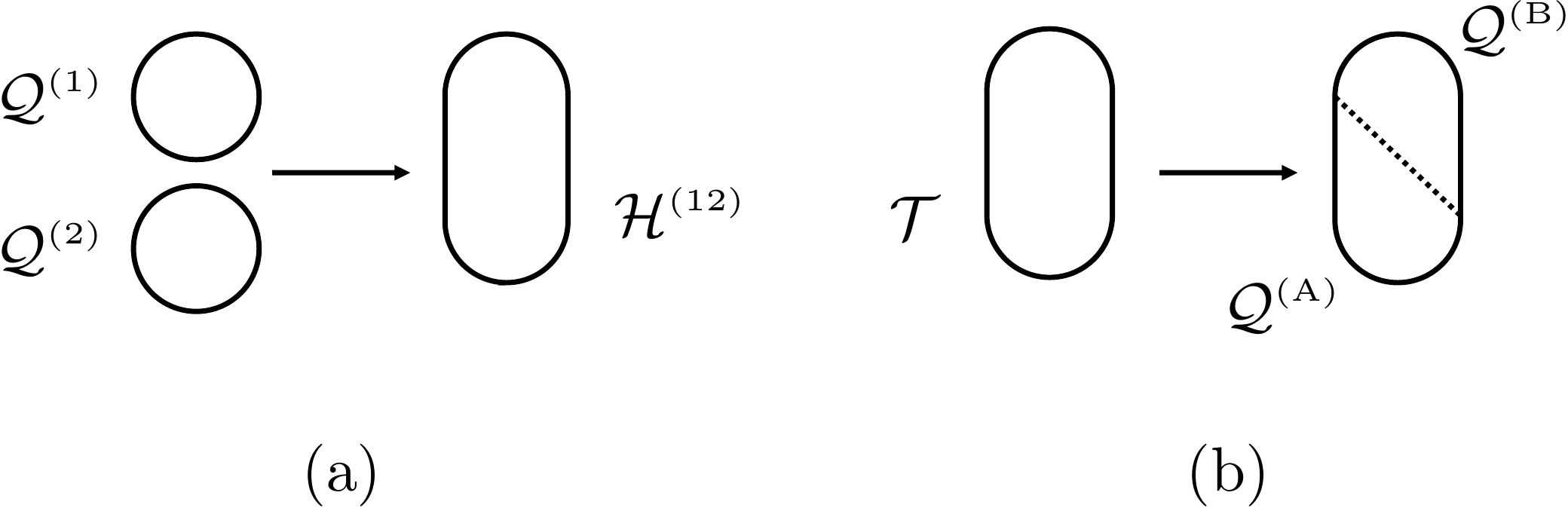}
\end{center}
\caption{In (a), we consider a composite system comprising qubits 1 and 2.
with Hilbert space $\hilbert\sys{12} = \qubitspace\sys{1} \otimes \qubitspace\sys{2}$.
In (b), a meronomic frame describes how to decompose a tictac system with
$\dim \tictacspace = 4$ into a pair of qubit subsystems A and B.}
\label{fig-tictacdiag}
\end{figure}
The general idea behind these examples is shown in Figure~\ref{fig-tictacdiag}.
Two qubits form a composite system system with Hilbert space 
$\tictacspace = \qubitspace \otimes \qubitspace$ (where $\qubitspace$ is
the qubit Hilbert space with $\dim \qubitspace = 2$).
Inspired by the oval shape of the composite in Figure~\ref{fig-tictacdiag},
we call any system with $\dim \tictacspace = 4$ a {\em tictac}.
Two qubits form a tictac, but any tictac can be decomposed into
qubits in many different ways, each corresponding to a particular
meronomic frame.

This paper investigates the concept of a meronomic frame by applying 
ideas and methods from the general theory of quantum reference frames
\cite{brs}.
The next section reviews a few ideas from that theory and describes an
example with useful analogies to our problem.
Section~\ref{sec-frames} proves a theorem that characterizes
meronomic frames.  In Section~\ref{sec-tasks} we describe a
type of operational task that requires meronomic frame information
to perform.  Section~\ref{sec-resources} shows how this meronomic
frame information can be provided in the form of a quantum state.
That is, the decomposition of one system can be encoded in the state of
other systems, which serve as a ``quantum meronomic frame''.  
Finally, in Section~\ref{sec-questions} we
make some general observations and pose a few
unanswered questions.

\section{Partial reference frames} \label{sec-partialframes}

Alice and Bob can exchange classical messages and 
quantum states, and they may perform various operations on the systems
in their possession.  They can cooperate to accomplish joint tasks.
To do so, however, they may need to share common ``reference frame''
information.  For example, if they wish to do a joint measurement 
of the $z$-component of spin on a pair of spin-1/2 particles, 
they must agree on the definition of the $z$-axis.

The idea of a reference frame is therefore closely related to the
idea of symmetry \cite{gms}.  
A symmetry group $G$---e.g., the group
of spatial rotations---connects the frames of Alice and Bob.  Each 
quantum system is equipped with a unitary representation of $G$.
A state $\ket{\psi}$ transferred from Alice to Bob is effectively changed:
$\ket{\psi} \rightarrow \oper{U}_{g} \ket{\psi}$, where $g \in G$.
In a similar way, if Bob prepares a state $\ket{\phi}$ and transfers it to Alice,
she receives $\oper{U}_{g}^{\dagger} \ket{\phi} = \oper{U}_{g^{-1}} \ket{\phi}$.
For a particular known $g \in G$, such a transformation is trivial 
and does not affect their ability to perform 
joint tasks.  In effect, Bob can simply adjust his frame to match Alice's.
However, Alice and Bob may not know
which symmetry group element relates their frames.

Here is a schematic representation of the situation:
\begin{equation} \label{eq-twobox}
	G \xrightarrow{g} \mathds{1} .
\end{equation}
Each position along the sequence represents a group of possible
frame transformations from Alice to Bob.  At the left, no frame 
information is shared, so any symmetry element in $G$ is
possible.  The arrow represents the determination of a particular
reference frame---that is, a particular group element $g$.  Once
$G$ is known by Alice and Bob, the frame group remaining is
$\mathds{1}$, which contains transformations that leave all states
invariant (but may change their overall phase).

Even if Alice and Bob do not know which element of $G$ relates
their reference frames, they can still perform {\em symmetric} cooperative tasks
with respect to $G$.  At some stage in such a task, Alice may require Bob to
do an operation that is defined in her frame.
\begin{itemize}
	\item  Alice may want Bob to produce a particular state $\ket{\phi}$
		(defined in her frame).  This is possible if $\ket{\phi}$
		is invariant with respect to the elements of $G$---that is,
		if $\oper{U}_{g} \ket{\phi} = e^{i \alpha_{g}} \ket{\phi}$ for
		all $g \in G$.
	\item  Alice may want Bob to make a measurement of
		a particular observable $\oper{A}$ (defined in her frame).  
		This is possible if $[ \oper{A}, \oper{U}_{g} ] = 0$ for all $g \in G$.
	\item  Alice may want Bob to perform a particular unitary
		transformation $\oper{V}$ (defined in her frame).
		This is possible if $[ \oper{V}, \oper{U}_{g} ] = 0$ for all $g \in G$.
\end{itemize} 

For example, suppose Alice and Bob are manipulating spin-1/2 particles,
but they lack a common set of Cartesian spatial axes.  The group $G$
of possible frame transformations is the full spin-1/2 rotation group $SU(2)$.
There are nevertheless invariant states that are the same for Alice and Bob,
such as the singlet state $\ket{\Psi_{-}}$ for a pair of spins.  Similarly,
Bob can measure the squared total angular momentum $\oper{S}^{2}$ for
the spin pair, and can reliably perform a SWAP operation on any pair of
spins.

It may happen that Alice and Bob share {\em partial} reference frame information,
so that the symmetry group of possible frame transformations is neither
all of $G$ nor the trivial group $\mathds{1}$.  In our spin-1/2 example,
Alice and Bob may share only a common $z$-axis.  

Equation~\ref{eq-zaxisframe} gives a schematic of the situation:
\begin{equation}  \label{eq-zaxisframe}
SU(2) \xrightarrow{\hat{z}} \{ \oper{R}_{z}(\theta) \} \xrightarrow{\theta} \mathds{1} .
\end{equation}
With no common frame information, the group of possible
transformations is $SU(2)$.  Once a common $z$-axis is
specified (that is, a common direction vector $\hat{z}$), 
the two frames are related by a rotation $\oper{R}_{z}(\theta)$
by an unknown angle $\theta$ about the $z$-axis.  If this
angle is specified, the two frames agree.

A partial reference frame reduces the symmetry
group from $G$ to a subgroup $G_{0}$.  The set of
partial reference frames is therefore a quotient structure
$G / G_{0}$ \cite{brs}.  This is not necessarily a group,
since $G_{0}$ need not be normal in $G$.  Both the 
$z$-axis example above and the meronomic frames 
considered below involve non-normal subgroups.  
Algebraically, a set of partial frames of this type is 
simply a collection of cosets of $G_{0}$ in $G$.
In schematic form,
\begin{equation}  \label{eq-partialframe}
G \xrightarrow{G_{0}g} G_{0} \xrightarrow{g_{0}} \mathds{1} ,
\end{equation}
where $G_{0} g$ is a (right) coset of $G_{0}$ in $G$---i.e., a 
particular partial frame.

When Alice and Bob share partial reference frame
information, they can perform cooperative tasks that
would otherwise be impossible.  In the spin-1/2 example,
given a shared $z$-axis, Bob can measure $\oper{S}_{z}$
for Alice and can prepare $\oper{S}_{z}$ eigenstates 
$\ket{\uparrow}$ and $\ket{\downarrow}$.  He can also perform 
any rotation $\oper{R}_{z}(\theta) = \exp (-i \oper{S}_{z} \theta)$ 
of a spin state about the common $z$-axis.  
Other measurements, states and operations 
remain impossible without additional reference
frame information.

In our discussion so far, we have treated reference frame
information in a wholly classical way.  Bob comes to share
a reference frame with Alice by identifying a particular element
$g$ of a symmetry group $G$ (or at least a particular coset in
$G/G_{0}$).  However, this information must
itself have a physical representation.  To put the same idea in 
different form, the reference frame is a source of asymmetry,
allowing Bob to cooperate with Alice in performing asymmetric
tasks.  Physically, this source of asymmetry must be a system
in an asymmetric state.

The theory of quantum reference frames examines how 
quantum states themselves may be used as asymmetry resources 
for asymmetric tasks.  We illustrate this by considering a special
type of problem.  Bob's reference frame differs from Alice's by some
unknown $g \in G$.  He possesses a quantum system on which
he must perform a measurement that includes the projection
$\proj{\phi}$ (as defined by Alice) as one of its effect operators.  
Unfortunately, this is not a symmetric measurement, 
since the state $\ket{\phi}$ is not invariant.
However, Bob does have at his disposal a collection of $n$ quantum
systems that have been prepared by Alice in the state $\ket{\phi}$,
forming a quantum reference frame.
How can he use this asymmetry resource to make the desired
measurement?

If $n$ were very large, Bob could use quantum tomography
\cite{leo}
to closely estimate the state $\ket{\phi}$, then use this information
to make an approximate measurement of $\proj{\phi}$ on
the system of interest.  But there is a more direct way of using
the quantum reference frame.  If the system of interest is in the
state $\ket{\psi}$ (in Alice's frame), we can write this as
\begin{equation}
	\ket{\psi} = \alpha \ket{\phi} + \beta \ket{\phi^{\perp}} ,
\end{equation}
where $\ket{\phi^{\perp}}$ is a system state orthogonal to $\ket{\phi}$.
The overall state of the $n+1$ systems is therefore
\begin{equation}
	\ket{\psi,\phi^{\otimes n}}
	= \alpha \ket{\phi^{\otimes (n+1)}} + \beta \ket{\phi^{\perp},\phi^{\otimes n}} .
\end{equation}
Bob can perform any symmetric measurement on the $n+1$ systems.  
In particular, he can measure the projection $\oper{\Pi}_{\mathcal{S}}$ 
onto the permutation-symmetric subspace $\mathcal{S}$.  (Each state
in $\mathcal{S}$ is an eigenstate of every such permutation with eigenvalue
+1.)  We can specify a basis for $\mathcal{S}$ that contains the following
two states:
\begin{eqnarray}
	\ket{\Sigma_{1}} & = & 
		\ket{\phi^{\otimes (n+1)}} \\
	\ket{\Sigma_{2}} & = & 
		\frac{1}{\sqrt{n+1}} \Big( 
		\ket{\phi^{\perp}, \phi^{\otimes n}} + 
		\ket{\phi, \phi^{\perp},\phi^{\otimes (n-1)}}  
			\nonumber \\ & & \quad
		+ \cdots + \ket{\phi^{\otimes n},\phi^{\perp}} 
		\Big )
\end{eqnarray}
so that
\begin{equation}
	\oper{\Pi}_{\mathcal{S}} \ket{\psi,\phi^{\otimes n}}
	= \alpha \ket{\Sigma_{1}} + \frac{\beta}{\sqrt{n+1}} \ket{\Sigma_{2}} .
\end{equation}
The probability that the state $\ket{\psi,\phi^{\otimes n}}$ is found to be
in $\mathcal{S}$ is
\begin{equation}
	P_{\mathcal{S}} = \bra{\psi,\phi^{\otimes n}} \oper{\Pi}_{\mathcal{S}}
		\ket{\psi,\phi^{\otimes n}} 
		= \absolute{\alpha}^{2} + \frac{\absolute{\beta}^{2}}{n+1} .
\end{equation}
This is equivalent to making a measurement on the system of interest
with an effect operator
\begin{equation} \label{eq-permsymeffect}
	\oper{E}_{\mathcal{S}} = \proj{\phi} + \frac{1}{n+1} \left (\idop - \proj{\phi} \right ).
\end{equation}
Even if $n = 1$, Equation~\ref{eq-permsymeffect} describes 
an asymmetric measurement, and so a single
qubit state can act as a reference frame.  For $n \gg 1$, the
permutation-symmetric subspace measurement yields a close 
approximation to the $\proj{\phi}$ effect operator.
If Bob has an unlimited supply of reference frame systems in state $\ket{\phi}$,
then he effectively has a classical reference frame.  But this is only a partial
reference frame.  For instance, an unlimited supply of spins in the state $\ket{\uparrow}$
can be used as a classical frame for the $z$-axis, but provides no information
about the orientation of the $x$ and $y$-axes in the plane perpendicular to $\hat{z}$.

We have seen how a single state $\ket{\phi}$ provides
a quantum resource for a partial reference frame,
leading to the construction of an approximate asymmetric 
measurement (Equation~\ref{eq-permsymeffect}).  We
will use this framework for meronomic frames
in Section~\ref{sec-resources} below.

\section{Characterizing meronomic frames} \label{sec-frames}

Subsystem decomposition determines entanglement.
If Alice and Bob share a common meronomic frame for a system,
then they will agree about which system states are product
states and which are entangled.  More abstractly, a meronomic
frame is a partial reference frame---that is, Alice and Bob
may agree on the Hilbert space tensor product structure
$\hilbert = \hilbert\sys{1} \otimes \hilbert\sys{2}$ without 
agreement on other aspects of their frames.  We must
characterize meronomic frames by identifying the subgroup of
frame transformations that leave this structure invariant.

Suppose $\hilbert = \hilbert\sys{1} \otimes \hilbert\sys{2}$.
We define the group $M$ in one of two ways:
\begin{itemize}
	\item  If $\dim\hilbert\sys{1} \neq \dim\hilbert\sys{2}$, then $M$ includes all 
		operators of the form $\oper{V}\sys{1} \otimes \oper{W}\sys{2}$,
		where $\oper{V}\sys{1}$ and $\oper{W}\sys{2}$ are
		unitary on $\hilbert\sys{1}$ and $\hilbert\sys{2}$, respectively.
	\item  If $\dim\hilbert\sys{1} = \dim\hilbert\sys{2}$, we choose a SWAP operator $\mathbb{X}$
		between the subsystems.  The group $M$ is the smallest
		group that includes all unitary $\oper{V}\sys{1} \otimes \oper{W}\sys{2}$
		and also $\mathbb{X}$.  Since
		\begin{equation}
			\mathbb{X} ( \oper{V}\sys{1} \otimes \oper{W}\sys{2}) 
				= ( \oper{W}\sys{1} \otimes \oper{V}\sys{2}) \mathbb{X},
		\end{equation}
		we can write $M = \{ (\oper{V}\sys{1} \otimes \oper{W}\sys{2}) \oper{\xi}_{i}  \}$,
		where $\oper{\xi}_{0} = \idop$ and $\oper{\xi}_{1} = \mathbb{X}$.
 \end{itemize}
The second ($\dim\hilbert\sys{1} = \dim\hilbert\sys{2}$) definition appears to depend on 
a particular isomorphism between $\hilbert\sys{1}$ and $\hilbert\sys{2}$,which we need for
SWAP operator $\mathbb{X}$, the identification of $\oper{V}\sys{1}$ with 
$\oper{V}\sys{2}$, etc.  However, the resulting group $M$ of operators
is the same regardless of the particular isomorphism (and thus the particular
$\mathbb{X}$) used to construct it.

The group $M$ defines equivalence between bipartite meronomic frames.  
That is, Alice and Bob share a meronomic frame if their reference frames are 
related by an element of $M$.  We justify this by this theorem:
\begin{theorem} \label{thm-products}
	Suppose $\oper{U}\sys{12}$ is a unitary operator on $\hilbert\sys{1} \otimes \hilbert\sys{2}$.
	The following statements are equivalent:
	\begin{enumerate}[label={\Roman*}.]
		\item  $\oper{U}\sys{12} \in M$.
		\item  For any state $\ket{\Psi\sys{12}}$, $\oper{U}\sys{12} \ket{\Psi\sys{12}}$ and 
			$\ket{\Psi\sys{12}}$ have the same Schmidt parameters.
		\item  $\oper{U}\sys{12} \ket{\Psi\sys{12}}$ is a product state if and only if
			$\ket{\Psi\sys{12}}$ is a product state.
	\end{enumerate}
\end{theorem}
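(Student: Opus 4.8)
The plan is to prove the equivalence by the cycle $\mathrm{I}\Rightarrow\mathrm{II}\Rightarrow\mathrm{III}\Rightarrow\mathrm{I}$, with essentially all of the work in the last implication. For $\mathrm{I}\Rightarrow\mathrm{II}$: if $\ket{\Psi\sys{12}}=\sum_k\lambda_k\ket{a_k\sys{1}}\ket{b_k\sys{2}}$ is a Schmidt decomposition, a local unitary $\oper{V}\sys{1}\otimes\oper{W}\sys{2}$ sends it to $\sum_k\lambda_k(\oper{V}\sys{1}\ket{a_k\sys{1}})(\oper{W}\sys{2}\ket{b_k\sys{2}})$, which is again a Schmidt decomposition with the \emph{same} coefficients because unitaries preserve orthonormality, while $\mathbb{X}$ merely interchanges the two tensor factors and so also leaves the list of Schmidt coefficients unchanged; since every element of $M$ is a product of operators of these two kinds, $\mathrm{I}$ implies $\mathrm{II}$. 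For $\mathrm{II}\Rightarrow\mathrm{III}$: a state is a product state precisely when its Schmidt rank equals $1$, and since by $\mathrm{II}$ the states $\oper{U}\sys{12}\ket{\Psi\sys{12}}$ and $\ket{\Psi\sys{12}}$ have the same Schmidt parameters, and hence the same rank, one is a product state if and only if the other is.

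The substance is $\mathrm{III}\Rightarrow\mathrm{I}$. If either tensor factor is one-dimensional then $M$ is the full unitary group and every state is a product state, so there is nothing to prove; I therefore assume $\dim\hilbert\sys{1},\dim\hilbert\sys{2}\ge 2$. The basic tool is the elementary fact that a sum $\ket{a}\ket{b}+\ket{c}\ket{d}$ of two nonzero product vectors is itself a product vector only if $\ket{a}$ is proportional to $\ket{c}$ or $\ket{b}$ is proportional to $\ket{d}$. From this I would prove a lemma: any linear subspace of $\hilbert\sys{1}\otimes\hilbert\sys{2}$ of dimension at least $2$ all of whose nonzero vectors are product vectors has either a common first factor (it lies in $\ket{\chi}\otimes\hilbert\sys{2}$ for some unit vector $\ket{\chi}$) or a common second factor (it lies in $\hilbert\sys{1}\otimes\ket{\psi}$). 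Now fix product bases $\{\ket{i\sys{1}}\}$ and $\{\ket{\mu\sys{2}}\}$; since $\oper{U}\sys{12}$ carries product states to product states (the forward half of $\mathrm{III}$), every ``slice'' $\oper{U}\sys{12}(\hilbert\sys{1}\otimes\ket{\mu\sys{2}})$ and $\oper{U}\sys{12}(\ket{i\sys{1}}\otimes\hilbert\sys{2})$ is an all-product subspace, so the lemma applies to it.

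A short case analysis on the dimensions then finishes the proof. If $\dim\hilbert\sys{1}<\dim\hilbert\sys{2}$, the slices $\oper{U}\sys{12}(\ket{i\sys{1}}\otimes\hilbert\sys{2})$ have dimension $\dim\hilbert\sys{2}$, too large to be contained in any $\hilbert\sys{1}\otimes\ket{\psi}$, so each has a common first factor $\ket{\chi_i\sys{1}}$; mutual orthogonality of distinct slices makes the $\ket{\chi_i\sys{1}}$ into an orthonormal basis of $\hilbert\sys{1}$, and after composing with the local unitary $\ket{i\sys{1}}\mapsto\ket{\chi_i\sys{1}}$ the operator takes the block-diagonal form $\sum_i\ket{i\sys{1}}\bra{i\sys{1}}\otimes\oper{W}_i$ with each $\oper{W}_i$ unitary. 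Requiring that this send a product state with full support on the first factor to a product state forces $\oper{W}_i\ket{\phi}$ and $\oper{W}_j\ket{\phi}$ to be proportional for every $\ket{\phi}$, whence each $\oper{W}_i\oper{W}_1^{\dagger}$ is a scalar multiple of $\idop$ (an operator that sends every vector to a multiple of itself is a multiple of the identity); so $\oper{U}\sys{12}$ is a local unitary and lies in $M$. The case $\dim\hilbert\sys{1}>\dim\hilbert\sys{2}$ is symmetric, using the other family of slices. When the dimensions are equal, no slice can have both a common first and a common second factor (that would make it one-dimensional), and orthogonality of distinct slices forces all of them to be of the \emph{same} one of the two types; if they all have a common second factor the previous argument applies verbatim, and if they all have a common first factor then the slices of $\mathbb{X}\oper{U}\sys{12}$ all have a common second factor (applying $\mathbb{X}$ converts a common first factor into a common second factor), so $\mathbb{X}\oper{U}\sys{12}$ is a local unitary and $\oper{U}\sys{12}=\mathbb{X}(\mathbb{X}\oper{U}\sys{12})\in M$. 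I expect the main obstacle to be bookkeeping rather than depth: giving a clean proof of the all-product-subspace lemma, and correctly handling the equal-dimension case where the SWAP operator must be introduced; everything else is routine linear algebra.
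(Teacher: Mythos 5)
Your proof is correct, and while it uses the same chain I $\Rightarrow$ II $\Rightarrow$ III $\Rightarrow$ I with the same easy first two implications, your treatment of the hard direction is organized genuinely differently from the paper's. The paper works pointwise with the images $\oper{U}\ket{j,m}$ of a product basis: two-term superpositions give a dichotomy for each pair of images, a transitivity argument plus a three-term superposition shows that one alternative holds uniformly across a row or column, and a final four-term superposition pins down $\oper{U}\ket{j,m} = \ket{a_{j},b_{m}}$. You package the first two of those steps into a single structural lemma---an all-product subspace of dimension at least $2$ has a common first or a common second tensor factor---applied to the slices $\oper{U}\sys{12}(\ket{i\sys{1}}\otimes\hilbert\sys{2})$; the paper's conditions (A)/(B) and (A')/(B') are exactly this lemma specialized to the $j=0$ row and $m=0$ column slices, and your block-diagonal form $\sum_{i}\proj{i\sys{1}}\otimes\oper{W}_{i}$ together with the fact that an operator sending every vector to a multiple of itself is scalar replaces the paper's four-term superposition. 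What your route buys is modularity: the dimension count, the orthogonality of the $\ket{\chi_{i}\sys{1}}$, and the SWAP case all fall out cleanly, and you only ever invoke the forward half of III. What it costs is that the lemma you defer is where the real work lives: the two-term fact alone shows each \emph{pair} of product basis vectors shares a first or a second factor, and you still need the connectivity argument (a three-vector superposition ruling out a mixture of the two types) that the paper carries out explicitly with its state $\ket{\Phi}$---write that out. One small slip: in the equal-dimension paragraph you have ``first'' and ``second'' interchanged relative to the unequal-dimension paragraph, where the slices $\oper{U}\sys{12}(\ket{i\sys{1}}\otimes\hilbert\sys{2})$ acquired a common \emph{first} factor; that is the case in which ``the previous argument applies verbatim,'' and the common-second-factor case is the one requiring composition with $\mathbb{X}$.
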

\begin{proof}
(I $\Rightarrow$ II).  Suppose $\oper{U}\sys{12} \in M$, and suppose $\ket{\Psi\sys{12}}$ has
a set of Schmidt parameters $\{ \lambda_{k} \}$.  This means that
\begin{equation}
	\ket{\Psi\sys{12}} = \sum_{k} \sqrt{\lambda_{k}} \ket{k\sys{1},k\sys{2}},
\end{equation}
where $\{ \ket{k\sys{1}} \}$ and $\{ \ket{k\sys{2}} \}$ are orthonormal sets of states
in $\hilbert\sys{1}$ and $\hilbert\sys{2}$.  Unitary operators map orthonormal
states to orthonormal states, so
\begin{equation}
	\oper{V}\sys{1} \otimes \oper{W}\sys{2} \ket{\Psi\sys{12}} 
		= \sum_{k} \sqrt{\lambda_{k}} \ket{{k'}\sys{1},{k'}\sys{2}}
\end{equation}
for orthonormal sets $\{ \ket{{k'}\sys{1}} \}$ and $\{ \ket{{k'}\sys{2}} \}$.
This state has the same Schmidt parameters $\lambda_{k}$.  Furthermore, if
$\dim \hilbert\sys{1} = \dim \hilbert\sys{2}$ and we define the SWAP operator
$\mathbb{X}$ so that $\mathbb{X} \ket{k\sys{1}, l\sys{2}} = \ket{l\sys{1},k\sys{2}}$,
we have that $\mathbb{X} \ket{\Psi\sys{12}} = \ket{\Psi\sys{12}}$.
Therefore the Schmidt parameters $\lambda_{k}$ are unchanged by
the application of any local unitary transformation 
$\oper{V}\sys{1} \otimes \oper{W}\sys{2}$ or of $\mathbb{X}$, 
and thus of any combination of these operators.  This includes every
$\oper{U}\sys{12} \in M$.

(II $\Rightarrow$ III).  Product states are those states with a single
nonzero Schmidt parameter $\lambda = 1$.  Thus, condition
2 implies condition 3 as a special case.

(III $\Rightarrow$ I).  We suppose that $\oper{U}\sys{12}$ and its
inverse ${\oper{U}\sys{12}}^{\dagger}$ take product states to
product states.  Let $\{ \ket{j\sys{1},m\sys{2}} \}$
be a product basis for $\hilbert\sys{1} \otimes \hilbert\sys{2}$.
Then the product states
\begin{equation}
	\ket{\alpha_{jm}\sys{1}, \beta_{jm}\sys{2}} = \oper{U}\sys{12} \ket{j\sys{1},m\sys{2}}
\end{equation}
constitute a basis of product states, though we do not yet know that it
is a product basis.

Dropping the subsystem identifiers for simplicity, we now
consider the states $\ket{\alpha_{0m}, \beta_{0m}} = \oper{U} \ket{0, m}$.
For $m \neq n$, the superposition
\begin{equation}
	\oper{U} \left ( \frac{1}{\sqrt{2}} (\ket{0,m} + \ket{0,n}) \right ) = 
		\frac{1}{\sqrt{2}} (\ket{\alpha_{0m},\beta_{0m}} + \ket{\alpha_{0n},\beta_{0n}} )
\end{equation}
must also be a product state.  It follows that either $\absolute{\amp{\alpha_{0m}}{\alpha_{0n}}} = 1$
or $\absolute{\amp{\beta_{0m}}{\beta_{0n}}} = 1$.  These two possibilities are mutually
exclusive.  Since $\ket{\alpha_{0m},\beta_{0m}}$ and $\ket{\alpha_{0n},\beta_{0n}}$ are
orthogonal, either $\amp{\alpha_{0m}}{\alpha_{0n}} = 0$ or $\amp{\beta_{0m}}{\beta_{0n}} = 0$.
Thus, exactly one of the following conditions must hold:
\begin{enumerate}[label=({\Alph*})]
	\item $\absolute{\amp{\alpha_{0m}}{\alpha_{0n}}} = 1$ and $\amp{\beta_{0m}}{\beta_{0n}} = 0$.
	\item $\amp{\alpha_{0m}}{\alpha_{0n}} = 0$ and $\absolute{\amp{\beta_{0m}}{\beta_{0n}}} = 1$.
\end{enumerate}
In fact, the same condition (either (A) or (B)) must hold for every pair
$\ket{\alpha_{0m},\beta_{0m}}$ and $\ket{\alpha_{0n},\beta_{0n}}$.
To see this, consider three states $\ket{\alpha_{0l},\beta_{0l}}$, $\ket{\alpha_{0m},\beta_{0m}}$
and $\ket{\alpha_{0n},\beta_{0n}}$.  
If $\absolute{\amp{\alpha_{0l}}{\alpha_{0m}}} = \absolute{\amp{\alpha_{0m}}{\alpha_{0n}}} = 1$,
then $\absolute{\amp{\alpha_{0l}}{\alpha_{0n}}} = 1$, and condition (A) holds for all three pairs.
Conversely, if $\absolute{\amp{\beta_{0l}}{\beta_{0m}}} = \absolute{\amp{\beta_{0m}}{\beta_{0n}}} = 1$,
then $\absolute{\amp{\beta_{0l}}{\beta_{0n}}} = 1$, and condition (B) holds.

By considering the states $\ket{\alpha_{j0}, \beta_{k0}}$, we arrive at a corresponding
pair of conditions, exactly one of which must hold for all such states:
\begin{enumerate}[label=({\Alph*}')]
	\item $\amp{\alpha_{j0}}{\alpha_{k0}} = 0$ and $\absolute{\amp{\beta_{j0}}{\beta_{k0}}} = 1$.
	\item $\absolute{\amp{\alpha_{j0}}{\alpha_{k0}}} = 1$ and $\amp{\beta_{k0}}{\beta_{k0}} = 0$.
\end{enumerate}
These conditions are related to the previous ones, in that either (A) and (A') both hold,
or (B) and (B') both hold.  Suppose instead that both (A) and (B') are true.  The state
\begin{eqnarray}
	\ket{\Phi} & = & \oper{U} \left ( \frac{1}{\sqrt{3}} (\ket{0,0} + \ket{0,m} + \ket{j,0}) \right )
		\nonumber \\
		& = & \frac{1}{\sqrt{3}} (\ket{\alpha_{00},\beta_{00}} + \ket{\alpha_{0m},\beta_{0m}} 
		+ \ket{\alpha_{j0},\beta_{j0}})
\end{eqnarray}
must be entangled.  From condition (A) we deduce that 
$\ket{\alpha_{0m}} = e^{i \phi} \ket{\alpha_{00}}$,
and from condition (B') we know $\ket{\alpha_{j0}} = e^{i \phi'} \ket{\alpha_{00}}$,
for some complex phases $e^{i \phi}$ and $e^{i \phi'}$.
Then
\begin{equation}
	\ket{\Phi} = \ket{\alpha_{00}} \otimes \frac{1}{\sqrt{3}}  (\ket{\beta_{00}} + e^{i \phi} \ket{\beta_{0m}} 
		+ e^{i \phi'} \ket{\beta_{j0}}),
\end{equation}
which is obviously a product state.  Hence (A) and (B') are inconsistent.  A similar
argument shows that (B) and (A') are also inconsistent.  Either both (A) and (A') hold,
or both (B) and (B') hold.

If $\dim \hilbert\sys{1} \neq \dim \hilbert\sys{2}$, we know that (A) and (A') are true.
That is, if $\dim \hilbert\sys{1} < \dim \hilbert\sys{2}$, condition (B) must be false,
since $\hilbert\sys{1}$ is not large enough to contain $\dim \hilbert\sys{2}$ 
orthogonal states.  Likewise, if $\dim \hilbert\sys{1} > \dim\hilbert\sys{2}$,
condition (B') must fail.  Conditions (B) and (B') require that the two Hilbert spaces
be isomorphic, so that $\dim \hilbert\sys{1} = \dim \hilbert\sys{2}$.

We will now show that, under conditions (A) and (A'), we can write
$\oper{U} = \oper{V} \otimes \oper{W}$.  First we
define $\ket{a_{0}} \otimes \ket{b_{0}} = \ket{\alpha_{00},\beta_{00}}$,
and then define other subsystem states $\ket{a_{j}}$ and $\ket{b_{m}}$
so that
\begin{eqnarray}
	\ket{\alpha_{0m},\beta_{0m}} & = & \ket{a_{0}} \otimes \ket{b_{m}}\\
	\ket{\alpha_{j0},\beta_{j0}} & = & \ket{a_{j}} \otimes \ket{b_{0}} .
\end{eqnarray}
The states $\ket{a_{j}}$ (including $\ket{a_{0}}$) form a basis for $\hilbert\sys{1}$,
just as the $\ket{b_{m}}$ states form a basis for $\hilbert\sys{2}$.
We further observe that the state
\begin{eqnarray}
	\ket{\Psi} & = & \oper{U} \left ( \frac{1}{2} (\ket{0} + \ket{j}) \otimes (\ket{0} + \ket{m}) \right )
		\nonumber \\
		& = & \frac{1}{2} \big( \ket{a_{0},b_{0}} + \ket{a_{j},b_{0}} 
		\nonumber \\ & & \quad
			+ \ket{a_{0},b_{m}} + \ket{\alpha_{jm},\beta_{jm}} \big)
\end{eqnarray}
must be a product state, which can only occur if $\ket{\alpha_{jm},\beta_{jm}} 
= \oper{U} \ket{j,m} = \ket{a_{j},b_{m}}$.  We can therefore define
unitary subsystem operators so that $\oper{V} \ket{j} = \ket{a_{j}}$ and 
$\oper{W} \ket{m} = \ket{b_{m}}$.  Because $\oper{U} = \oper{V} \otimes \oper{W}$,
it follows that $\oper{U} \in M$.

Now imagine that $\dim \hilbert\sys{1} = \dim \hilbert\sys{2}$, and that both
conditions (B) and (B') hold.  We introduce a SWAP operator $\mathbb{X}$
for which $\mathbb{X} \ket{j,m} = \ket{m,j}$.  It is not difficult to see that
the operator $\oper{U} \mathbb{X}$ preserves product states and 
must satisfy conditions (A) and (A').  Therefore we can write 
$\oper{U} = (\oper{V} \otimes \oper{W}) \mathbb{X}$, which is
also in the group $M$.
\end{proof}

Two frames that are related by an element of $M$ will have the
same product states, and indeed the same Schmidt parameters 
for all bipartite states.  The two frames thus correspond to the same
bipartite meronomic frame.  Algebraically, let us identify the full
unitary group $U(\hilbert)$ as the
group of all possible frames.  For a general $\oper{U} \in
U(\hilbert)$, the states $\ket{\Psi}$
and $\oper{U} \ket{\Psi}$ may have very different Schmidt
parameters.  The possible meronomic frames
are just the cosets $U(\hilbert) / M$.
Two frame operators $\oper{U}_{1}$ and $\oper{U}_{2}$ are in
the same coset if they differ only by an element of $M$:
\begin{equation}
	\oper{U}_{2} = \oper{M} \oper{U}_{1}
\end{equation}
for some $\oper{M} \in M$.  Thus, two operators $\oper{U}_{1}$
and $\oper{U}_{2}$ correspond to the
same meronomic frame transformation if they change the 
Schmidt parameters of a state in the same way.

We can extend Theorem~\ref{thm-products} in various ways.
For example, the equivalence of statements 1 (that $\oper{U} \in M$)
and 3 (that $\oper{U}$ preserves product states) can be extended
to the $n$-partite case by induction.  We omit the proof of this
fact, and only remark that the group $M$ must be extended with
system-permutation operations among all subsystems having 
the same Hilbert space dimension.

For tictac systems composed of two qubits, we have a further
characterization of $M$:  two frames entail the same meronomic
frame if and only if the same states are maximally entangled
in each.  We state and prove this theorem in the appendix.

\section{Meronomic tasks} \label{sec-tasks}

If Alice and Bob share a meronomic frame for a composite system,
they should be able to perform cooperative tasks that would be
impossible absent such a shared frame.  What sort of task does
a meronomic frame enable?  Here is a simple example.  We suppose
$\dim \hilbert\sys{1} = \dim \hilbert\sys{2}=d$, so that the meronomic
group $M$ includes both local frame transformations and a SWAP
operation.  This means that Alice and Bob do not have shared 
subsystem frames or subsystem labels.
In our proposed task, Alice prepares a state
$\ket{\Psi}$ and gives it to Bob.  Bob receives $\ket{\Psi'} = \oper{U}_{g} \ket{\Psi}$
for some $\oper{U}_{g} \in M$.  His task is to apply a 
unitary $\oper{V}$ to the system that will yield a state
$\oper{V}\ket{\Psi'}$ orthogonal to $\ket{\Psi'}$.

This task could be used as part of a communication scheme.
Bob can either choose to apply $\oper{V}$ or the identity $\idop$
to the system.  If he then returns the state to Alice, she can make a measurement
in her own frame to determine which operator has been applied.
Also note that, if Alice and Bob share no common frame
information at all, the task is impossible.  If $\oper{U}_{g}$ is completely
unknown, the state $\ket{\Psi'}$ might turn out to be 
an eigenstate of $\oper{V}$, so that $\ket{\Psi'}$ and 
$\oper{V} \ket{\Psi'}$ would be indistinguishable. 

By contrast, if Alice and Bob share a meronomic frame, they can accomplish
this task.  Alice prepares $\ket{\Psi}$ as a maximally entangled
state of the two subsystems.  Therefore $\ket{\Psi'}$ is also
maximally entangled.  If Bob chooses a basis $\{ \ket{k\sys{1}} \}$
for his subsystem 1, it must be that
\begin{equation}
	\ket{\Psi'} = \frac{1}{\sqrt{d}} \sum_{k} \ket{k\sys{1},\psi_{k}\sys{2}}
\end{equation}
for an orthonormal basis $\{ \ket{\psi_{k}\sys{2}} \}$ (unknown to Bob).
Bob chooses $\oper{V} = \oper{W}\sys{1} \otimes \idop\sys{2}$, where
\begin{equation}
	\oper{W} \ket{k\sys{1}} = \ket{((k+1) \bmod d)\sys{1}} .
\end{equation}
Then
\begin{eqnarray}
	\bra{\Psi'} \oper{V} \ket{\Psi'}
	& = & \frac{1}{d} \sum_{j,k} \amp{j\sys{1}}{((k+1)\bmod d)\sys{1}} \amp{\psi_{j}\sys{2}}{\psi_{k}\sys{2}}
	\nonumber \\
	& = & \frac{1}{d} \sum_{k} \amp{\psi_{(k+1)\bmod d}\sys{2}}{\psi_{k}\sys{2}} \nonumber \\
	& = & 0.
\end{eqnarray}
(This choice of $\oper{V}$ is far from unique.  Any operator of the
form $\oper{W} \otimes \idop$ or $\idop \otimes \oper{W}$, where
$\tr \oper{W} = 0$, will work as well.)

This protocol is in fact a partial version of superdense coding
\cite{bw92,ncbook}.
Suppose the system is a tictac, and that Alice and Bob agree
on its division into a pair of qubits.
Bob only interacts with one of the qubits (a fact that makes
sense in both frames), but is nevertheless able to encode 
a one-bit message in the entangled state.  
If he and Alice share a complete frame for the tictac, then
he can do even more.  The operators $\oper{1} \otimes \oper{1}$,
$\oper{X} \otimes \oper{1}$, $\oper{Y} \otimes \oper{1}$ and
$\oper{Z} \otimes \oper{1}$ yield four states that are 
distinguishable in Alice's frame, so that Bob can encode
a two-bit message.

Now we turn to another example of a task, one that 
will reveal a good deal about meronomic frames.  
Again, our basic system is a tictac, and the frames of
Alice and Bob are related by some $\oper{U}_{g} \in M$.
The degree of entanglement of a two-qubit state is characterized
by a single independent Schmidt parameter.  That is, given
the composite state $\ket{\Phi}$ we can find bases $\{\ket{a_{0}},\ket{a_{1}} \}$
for $\qubitspace\sys{1}$ and $\{\ket{b_{0}},\ket{b_{1}} \}$
for $\qubitspace\sys{2}$ so that
\begin{equation}
	\ket{\Phi} = \sqrt{\lambda} \ket{a_0,b_0}
		+ \sqrt{1-\lambda} \ket{a_{1},b_{1}} 
\end{equation}
for some $\lambda \in [0,1/2]$.  The state $\ket{\Phi}$
is a product state when $\lambda = 0$ and maximally
entangled when $\lambda = 1/2$.

Suppose Alice prepares a tictac in the state $\ket{\Phi}$ with
Schmidt parameter $\lambda$, and then she
delivers it to Bob, who receives the state $\oper{U}_{g} \ket{\Psi}$.
This state has the same value of $\lambda$ as
Alice's original.  Consider therefore a task in which Bob determines
$\lambda$ for a state $\ket{\Phi}$ provided by Alice.  This is
plainly impossible without a shared meronomic frame, but
perhaps possible with such a shared frame.

Even with a shared meronomic frame, though, Bob cannot determine anything
about $\lambda$ from a single copy of the tictac state $\ket{\Phi}$. 
Since Alice and Bob do not know which transformation
$\oper{U}_{g} \in M$ connects their frames, Bob effectively 
receives the mixed state \cite{brs}
\begin{equation}
	\oper{\sigma} = \int_{M} dg \, \oper{U}_{g} \proj{\Phi} \oper{U}_{g}^{\dagger} .
\end{equation}
(Here we use the unique invariant measure on the compact group $M$.)
The density operator $\oper{\sigma}$ is sometimes called the ``twirl'' of
$\proj{\Psi}$ over $M$, and is itself $M$-invariant.  That is,
\begin{equation}
	\oper{\sigma} = \oper{U}_{g} \oper{\sigma} \oper{U}_{g}^{\dagger}
\end{equation}
for every $\oper{U}_{g} \in M$.  But there is only one such mixed 
state:  $\oper{\sigma} = \frac{1}{4} \idop\sys{1} \otimes \idop\sys{2}$.
Therefore, no matter what measurement Bob chooses to make on
the tictac, the measurement outcomes will have the same probabilities
regardless of Alice's original state $\ket{\Psi}$.  Bob can learn nothing
about the parameter $\lambda$.

If Alice sends many tictacs to Bob, each prepared in the same state
$\ket{\Psi}$, then Bob can do better.  He receives the state
$\left ( \oper{U}_{g} \ket{\Psi} \right )^{\otimes N}$, and by performing quantum
tomography on the tictacs he can arrive at an estimate of the
state $\oper{U}_{g} \ket{\Psi}$, and thus an estimate of $\lambda$.
This estimate will converge on the correct value as $N \rightarrow \infty$.
Thus, the task we propose is that Bob estimates the Schmidt 
parameter $\lambda$ for the tictac state $\ket{\Phi}$, given that 
Alice has prepared $N>1$ tictacs in that state.

Bob can obtain some information about $\lambda$ even if 
$N = 2$.
Recall the two-qubit singlet state
$\ket{\Psi_{-}\sys{12}}$ of Equation~\ref{eq-bellstates}.  If $\oper{V}$ is
a single-qubit unitary, then
\begin{equation}
	\oper{V}\sys{1} \otimes \oper{V}\sys{2} \ket{\Psi_{-}\sys{12}} = 
		e^{i \phi} \ket{\Psi_{-}\sys{12}}
\end{equation}
for some phase $e^{i \phi}$. 
Given tictacs 12 and 34 (comprising qubits 1 and 2, and 3 and 4,
respectively), we define
\begin{equation}
	\ket{\Lambda} = \ket{\Psi_{-}\sys{13},\Psi_{-}\sys{24}} .
\end{equation}
This is illustrated in Figure~\ref{fig-lambdastate}
\begin{figure}
\begin{center}
\includegraphics[width=2in]{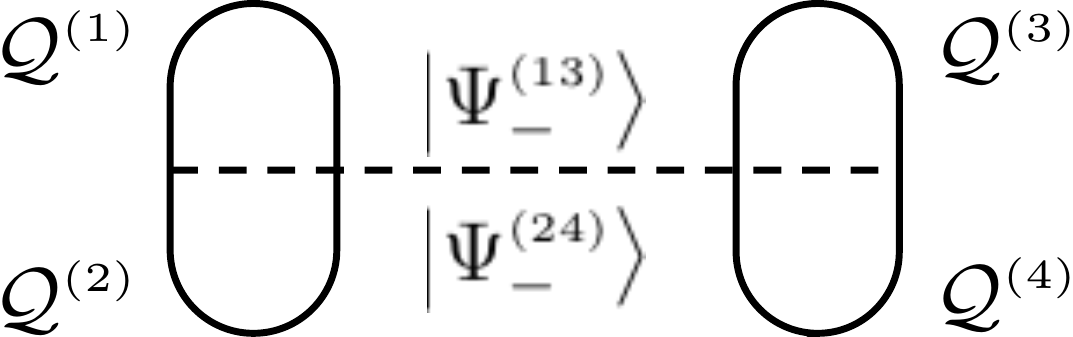}
\end{center}
\caption{The $\ket{\Lambda}$ state is a product state
	between 13 and 24 but is an entangled state of the tictacs
	12 and 34.  \label{fig-lambdastate}}
\end{figure}
For the two-tictac system 1234, any element of the meronomic group $M$
is of the form
\begin{equation}
	\oper{U}_{g} = \left ( \oper{V}\sys{1} \otimes \oper{W}\sys{2}
			\otimes \oper{V}\sys{3} \otimes \oper{W}\sys{4} \right )
			\left ( \oper{\xi}_{i}\sys{12} \otimes \oper{\xi}_{i}\sys{34} \right) ,
\end{equation}
where as before $\oper{\xi}_{i}$ is either $\idop$ or $\mathbb{X}$.
Clearly $(\oper{\xi}_{i}\sys{12} \otimes \oper{\xi}_{i}\sys{34}) \ket{\Lambda}
= \ket{\Lambda}$, and thus
\begin{eqnarray}
	\oper{U}_{g} \ket{\Lambda}
	& = & \left ( \oper{V}\sys{1} \otimes \oper{V}\sys{3} \ket{\Psi_{-}\sys{13}} \right )
		\otimes \left (\oper{W}\sys{2} \otimes \oper{W}\sys{4} \ket{\Psi_{-}\sys{24}} \right )
		\nonumber \\
	& = & e^{i \phi} \ket{\Lambda}.
\end{eqnarray}
The state $\ket{\Lambda}$ is invariant under $M$.  If Alice
prepares two tictacs in this state, Bob receives the state unchanged
except possibly in phase.  Furthermore, on any tictac pair 
he can make a symmetric measurement that has $\proj{\Lambda}$ 
as one of its effects.

Suppose therefore that Alice sends two tictacs in a duplicated
state $\ket{\Phi,\Phi}$.  If we fix standard bases
$\{ \ket{0},\ket{1} \}$ for the qubits, we can always write
$\ket{\Phi} = \oper{V} \otimes \oper{W} \ket{\Phi'}$, with
\begin{equation}
	\ket{\Phi'} = \sqrt{\lambda} \ket{00} + \sqrt{1-\lambda} \ket{11} ,
\end{equation}
so that
\begin{eqnarray}
	\ket{\Phi',\Phi'} & = & \lambda \ket{0000} + \sqrt{\lambda (1-\lambda)}
		\ket{0011} \nonumber \\ & & + \sqrt{\lambda(1-\lambda)} \ket{1100}
		+ (1-\lambda) \ket{1111} .
\end{eqnarray}
We can also write
\begin{equation}
	\ket{\Lambda} =
		\frac{1}{2} \left ( \ket{0011} - \ket{0110} - \ket{1001} + \ket{1100} \right ) .
\end{equation}
Hence, if Bob makes a symmetric measurement with effect
$\proj{\Lambda}$, the probability of this effect is
\begin{eqnarray}
	\absolute{\amp{\Lambda}{\Phi,\Phi}}^2
	& = & \absolute{\bra{\Lambda}
		\oper{V} \otimes \oper{W} \otimes \oper{V} \otimes \oper{W}
		\ket{\Phi',\Phi'}}^2 \nonumber \\
	& = & \absolute{\amp{\Lambda}{\Phi',\Phi'}}^2 \nonumber \\
	& = & \lambda (1 - \lambda). \label{eq-lambdaprob}
\end{eqnarray}
If $\ket{\Phi}$ is a product state of qubits ($\lambda = 0$), 
the $\proj{\Lambda}$ effect never occurs in the measurement.
Even a single measurement on a duplicated tictac state
$\ket{\Phi, \Phi}$ may thus reveal that $\ket{\Phi}$ is
entangled.  If Alice provides many tictac pairs in the state $\ket{\Phi,\Phi}$, 
Bob can make a statistical estimate of $P_{\Lambda} = \lambda (1-\lambda)$
and infer the Schmidt parameter $\lambda$ itself.

The state $\ket{\Lambda}$ has several remarkable properties that
are worth noting.  It, like every duplicated state $\ket{\Phi,\Phi}$,
lies in the symmetric subspace $\mbox{Sym}(\tictacspace \otimes \tictacspace)$.
This subspace has dimension 10.  The duplicated product states all
lie in a 9-dimensional subspace of $\mbox{Sym}(\tictacspace \otimes \tictacspace)$, 
and $\ket{\Lambda}$ is the unique symmetric state orthogonal to all
of them.  Furthermore, no duplicated entangled state is orthogonal
to $\ket{\Lambda}$.

\section{Quantum meronomic resources}  \label{sec-resources}

We have seen that the only tictac state that is invariant
under the meronomic group $M$ is the mixed state 
$\frac{1}{4} \idop\sys{1} \otimes \idop\sys{2}$.  This is
a special case of a more general fact.  Suppose we have
any composite system with Hilbert space 
$\hilbert = \hilbert\sys{1} \otimes \hilbert\sys{2} \otimes \cdots$.
This system has a meronomic group $M$ as defined in 
Section~\ref{sec-frames}.  If we ``twirl'' any state by the
elements of $M$, we obtain the unique $M$-invariant state
\begin{equation}
	\oper{\sigma} = \frac{1}{d} \, \idop\sys{1} \otimes \idop\sys{2} \otimes \cdots
\end{equation}
where $d = \dim \hilbert$.  Thus, no state of a single system
can convey meronomic frame information from Alice to Bob,
nor can it serve as a quantum resource for a cooperative task that
depends on a common meronomic frame.  One tictac can
indicate nothing about how a tictac should be decomposed
into qubits.

A pair of tictacs in the state $\ket{\Lambda}$, however,
does carry meronomic information.  Suppose Bob wishes
to make the $\lambda$-estimating measurement on the
duplicated state $\ket{\Phi,\Phi}$.  Alice has provided $n$ 
additional tictac pairs in the state $\ket{\Lambda}$ to serve
as a quantum meronomic resource.  This is analogous to
the situation in Section~\ref{sec-partialframes} in which 
$n$ spin states $\ket{\uparrow}$ served as a quantum 
reference frame for a measurement of $\oper{S}_{z}$ on
another spin.  Bob now proceeds exactly as he did then.
His $n+1$ tictac pairs are in the state
$\ket{\Phi,\Phi,\Lambda^{\otimes n}}$, and he measures
whether this state is in the permutation-symmetric
subspace of the $n+1$ pairs.  Equation~\ref{eq-permsymeffect}
tells us that this is equivalent to measuring the effect
\begin{equation}
	\oper{E}_{\mathcal{S}} = \proj{\Lambda} + \frac{1}{n+1} \left (\idop - \proj{\Lambda} \right ).
\end{equation}
on the tictac pair of interest.  This provides some 
meronomic frame-dependent information even for $n=1$, 
and it yields a good approximation of the desired 
measurement for $n \gg 1$.

Specifying the state $\ket{\Lambda}$ for a tictac pair
completely determines the tictac meronomic frame,
since the Schmidt parameter $\lambda$ of any tictac
state $\ket{\Phi}$ is determined by Equation~\ref{eq-lambdaprob}.
But since $\ket{\Lambda}$ is $M$-invariant, the state
provides no additional frame information beyond the 
subsystem decomposition itself.
An unlimited supply of $\ket{\Lambda}$ states is therefore
equivalent to a classical meronomic frame.

The meronomic frames of a tictac system
form a continuous set.  The product basis states in 
Equation~\ref{eq-thetaframe} define a different meronomic
frame for each value of $\theta \in [0,2\pi)$.
Therefore, by the reference-frame
version of the Wigner-Araki-Yanase theorem
\cite{ms}, 
no finite collection of systems can convey complete
meronomic frame information.  As we will now show, 
however, a single tictac pair {\em can} convey complete 
information about how subsystems are to be labeled.

We have defined the meronomic group $M$ to include
the swap operation $\mathbb{X}$ between identical
subsystems.  That is, Alice and Bob may share a common
decomposition of a tictac into a pair of qubits, but not
have a common assignment of labels 1 and 2 to the
qubits.  They lack a common ``reference ordering'' for
the subsystems \cite{brs}.
If Alice and Bob have both the same subsystem decomposition
and the same reference ordering, we say that they share an
{\em ordered} meronomic frame.  The symmetry group $M_{0}$ for 
such a frame includes all product unitary operators
$\oper{V}\sys{1} \otimes \oper{W}\sys{2}$, but not $\mathbb{X}\sys{12}$.
Schematically,
\begin{equation}
	U(\hilbert) 
	\xrightarrow{\qubitspace \otimes \qubitspace} M
	\xrightarrow{\oper{\xi}} M_{0}
	\xrightarrow{\oper{V} \otimes \oper{W}} \mathds{1} ,
\end{equation}
were $\oper{\xi}$ is either $\oper{1}$ or $\mathbb{X}$.

If Alice and Bob initially share a meronomic frame for tictac systems,
how can they come to share an ordered meronomic frame?  To do
this, Alice prepares a mixed state $\oper{\tau}$ of a tictac pair:
\begin{equation}
	\oper{\tau} = \proj{\Psi_{-}\sys{13}} \otimes \left ( \frac{1}{3} \, \oper{\Pi}_{\mathcal{S}}\sys{24} \right ),
\end{equation}
where $\oper{\Pi}_{\mathcal{S}}$ and $\proj{\Psi_{-}}$
are projections onto symmetric and antisymmetric subspaces of
a two-qubit system.  The state $\oper{\tau}$ is invariant under
the elements of $M_{0}$, but 
\begin{eqnarray}
	\oper{\tau}' & = & (\mathbb{X}\sys{12} \otimes \mathbb{X}\sys{34})
		\oper{\tau} (\mathbb{X}\sys{12} \otimes \mathbb{X}\sys{34})^{\dagger}
		\nonumber \\ & = & 
		\left ( \frac{1}{3} \, \oper{\Pi}_{\mathcal{S}}\sys{13} \right ) \otimes
		\proj{\Psi_{-}\sys{24}} 
\end{eqnarray}
is in fact orthogonal to $\oper{\tau}$.  When Bob receives the tictac
pair from Alice, it is either in state $\oper{\tau}$ or $\oper{\tau}'$,
depending on whether their frames differ by a SWAP.  He can
distinguish these two possibilities by a measurement, and then
adjust his own subsystem labels to agree with Alice's.

\section{Prospects and questions}  \label{sec-questions}

In quantum theory, we often imagine that the
quantum world is assembled from elementary subsystems.
A quantum computer, for instance, is regarded as a collection of
well-defined qubits.  The tensor product rule for system
composition is posed as a basic postulate of quantum theory
\cite{ncbook}.
Even abstract axiomatic reconstructions of quantum theory
often include a composition axiom that leads to the tensor
product rule \cite{hardy,chiribella}.

But the world is not like that.  Subsystem decomposition is a
structure imposed on a system by our description of it.
Of course, our choice of meronomic frame may be motivated by 
sound practical considerations.  We ask what degrees of 
freedom have nearly independent dynamics, or are
individually controllable, or are resistant to decoherence
\cite{lw}.
In any experimental realization of quantum computing, a
key issue is identifying suitable qubits!  Nevertheless,
the meronomic frame we use remains a 
choice, and any composite quantum system can be
decomposed into subsystems in infinitely many ways.

A meronomic frame is a prerequisite for many
concepts in quantum theory.  Without a given division into
subsystems, we cannot begin to discuss entanglement,
decoherence, locality, or a host of other issues.
All of these depend in an essential way on the choice of 
meronomic frame.

In this paper we have only begun to explore meronomic
reference frames, and many questions are still open.
The characterization in Theorem~\ref{thm-products} applies to
bipartite systems of all sizes, and the equivalence of
statements I and III also holds for $n$-partite systems
(with a suitable generalization of the definition of 
the group $M$).  But what about Theorem~\ref{thm-maxents},
which we have proven only for tictac systems?

Furthermore, we observe that the ``partial superdense coding''
task described in Section~\ref{sec-tasks} does not require
a complete shared meronomic frame.  It suffices for Alice
and Bob to agree on a two-dimensional subspace
$\mathcal{P} \subset \tictacspace$ of product states,
from which Bob can construct the required unitary $\oper{V}$.
This suggests that Alice and Bob can share a
``sub-meronomic'' frame, giving them some (but not all)
of the power of a shared subsystem decomposition.
How should such frames be described?  What quantum
resources can represent them?

Here we must acknowledge a difficulty.
In the theory of quantum reference
frames, frame information is embodied in asymmetric physical
states.  The spatial coordinates used in a laboratory,
necessary for experiments on atomic-scale systems, depend
on the translationally and rotationally asymmetric state 
of the laboratory apparatus.  But notice that the whole
idea of a quantum reference frame presupposes a subsystem
decomposition of the world.  This is equally true of our quantum
meronomic frames.  We can use the $\ket{\Lambda}$ state to
represent the subsystem decomposition of a tictac, but only
if we have already divided the world into tictacs.

This raises the spectre of infinite regress.  We can represent 
subsystem decomposition by physical means only if we 
take as given the decomposition of a larger system.  
We must therefore conclude that our discussion of
quantum meronomic frames does not completely account
for the physical origin of the subsystem structure
of the quantum world.

The authors gratefully acknowledge many helpful conversations
about meronomic frames and related questions with Ian George,
Iman Marvian, Roman Plesser, Paul Skrzypczyk, Mike Westmoreland,
and Bill Wootters.

\appendix
\section{Characterizing $M$ for qubit pairs}

In this appendix we prove the following theorem for tictac systems,
which can be viewed as a supplement to Theorem~\ref{thm-products}.
\begin{theorem}  \label{thm-maxents}
	Suppose $\hilbert\sys{12} = \qubitspace\sys{1} \otimes \qubitspace\sys{2}$.  Then
	the following are equivalent:
	\begin{enumerate}[label={\Roman*}.]
	\item  $\oper{U}\sys{12} \in M$.  \setcounter{enumi}{3}
	\item  $\oper{U}\sys{12} \ket{\Psi\sys{12}}$ is maximally entangled if and only if 
		$\ket{\Phi\sys{12}}$ is maximally entangled.  
	\end{enumerate}
\end{theorem}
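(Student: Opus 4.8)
The plan is to reduce the new equivalence to Theorem~\ref{thm-products}. The implication I~$\Rightarrow$~IV is immediate: if $\oper{U}\sys{12}\in M$ then by statement~II of Theorem~\ref{thm-products} it preserves the Schmidt parameters of every state, and a tictac state is maximally entangled exactly when both of its Schmidt parameters equal $\onehalf$. For the converse I would show that IV forces statement~II, after which statement~I follows directly from Theorem~\ref{thm-products}. So the real content is: \emph{preservation of maximal entanglement implies preservation of the Schmidt parameter.}

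The tool for this is an expression for the Schmidt parameter of a tictac state in terms of its overlaps with maximally entangled states. Identify $\qubitspace\sys{1}\otimes\qubitspace\sys{2}$ with the space of $2\times2$ matrices via the reshaping map $\text{vec}(\ket{j}\bra{m})=\ket{j\sys{1},m\sys{2}}$, so that $\amp{\text{vec}(A)}{\text{vec}(B)}=\tr(A^{\dagger}B)$. Under this map the normalized maximally entangled states are precisely the vectors $\text{vec}(\oper{V}/\sqrt{2})$ with $\oper{V}$ unitary, and a state $\ket{\Phi}$ with Schmidt parameter $\lambda\in[0,\onehalf]$ corresponds to $\text{vec}(D)$ for a matrix $D$ whose singular values are $\sqrt{\lambda}$ and $\sqrt{1-\lambda}$. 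Hence, with $\ket{\Omega}$ below ranging over maximally entangled states and $\oper{V}$ over unitaries,
\begin{equation}
	\max_{\Omega}\absolute{\amp{\Omega}{\Phi}}^{2}
	=\onehalf\max_{\oper{V}}\absolute{\tr(\oper{V}^{\dagger}D)}^{2}
	=\onehalf\,\|D\|_{1}^{2}
	=\onehalf\left(\sqrt{\lambda}+\sqrt{1-\lambda}\right)^{2} ,
\end{equation}
using the elementary fact that the trace norm of a $2\times2$ matrix is the largest value of $\absolute{\tr(\oper{V}^{\dagger}D)}$ over unitary $\oper{V}$. Since this quantity equals $\onehalf\bigl(1+2\sqrt{\lambda(1-\lambda)}\bigr)$, it is a strictly increasing function of $\lambda$ on $[0,\onehalf]$ and therefore determines it. In particular $\ket{\Phi}$ is a product state exactly when this maximum equals $\onehalf$.

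Now assume $\oper{U}\sys{12}$ satisfies IV, so that $\oper{U}\sys{12}$ and hence ${\oper{U}\sys{12}}^{\dagger}$ carry the set of maximally entangled states bijectively onto itself. Then for any $\ket{\Phi}$, using unitarity and this bijectivity,
\begin{equation}
	\max_{\Omega}\absolute{\bra{\Omega}\oper{U}\sys{12}\ket{\Phi}}^{2}
	=\max_{\Omega}\absolute{\amp{{\oper{U}\sys{12}}^{\dagger}\Omega}{\Phi}}^{2}
	=\max_{\Omega}\absolute{\amp{\Omega}{\Phi}}^{2} ,
\end{equation}
the last equality because ${\oper{U}\sys{12}}^{\dagger}\ket{\Omega}$ runs over all maximally entangled states as $\ket{\Omega}$ does. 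Thus $\oper{U}\sys{12}\ket{\Phi}$ and $\ket{\Phi}$ give the same value in the displayed formula, so they have the same Schmidt parameter $\lambda$ and hence the same pair of Schmidt parameters. This is statement~II of Theorem~\ref{thm-products}, and that theorem then yields $\oper{U}\sys{12}\in M$, which is statement~I.

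The only step needing real care is the overlap identity: one must confirm that the reshaping map sends maximally entangled states exactly to rescaled unitaries (and product states exactly to rank-one matrices), and then invoke the unitary characterization of the trace norm, which for $2\times2$ matrices is a one-line computation. Beyond that, nothing about the structure of $M$ is needed that is not already supplied by Theorem~\ref{thm-products}.
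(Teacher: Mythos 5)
Your argument is correct, and it takes a genuinely different route from the paper's. The paper proves IV $\Rightarrow$ I by establishing IV $\Rightarrow$ III (preservation of product states): it introduces the connecting operator $\oper{U}^{\phi}_{\psi}$ between two maximally entangled states, proves that a superposition $\frac{1}{\sqrt{2}}(\ket{\Psi}+\ket{\Phi})$ of maximally entangled states is maximally entangled iff $\oper{U}^{\phi}_{\psi}$ is anti-Hermitian and is a product state when $\oper{U}^{\phi}_{\psi}$ is Hermitian and the states are orthogonal, then writes an arbitrary product state as $\frac{1}{\sqrt{2}}(\ket{\Gamma}+\ket{\Delta})$ with a Hermitian connector and uses the auxiliary maximally entangled state $\frac{1}{\sqrt{2}}(\ket{\Gamma}+i\ket{\Delta})$ to force the image to be a product state; Theorem~\ref{thm-products} (III $\Rightarrow$ I) then finishes. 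You instead prove the stronger implication IV $\Rightarrow$ II directly, by noting that the maximal overlap of $\ket{\Phi}$ with the set of maximally entangled states equals $\frac{1}{2}\bigl(1+2\sqrt{\lambda(1-\lambda)}\bigr)$ --- the trace norm of the reshaped coefficient matrix --- which is strictly monotone in $\lambda$ on $[0,\frac{1}{2}]$, and that condition IV makes $\oper{U}\sys{12}$ and ${\oper{U}\sys{12}}^{\dagger}$ act bijectively on that (compact) set, so the maximum, and hence $\lambda$, is preserved. Your route is shorter, more quantitative, and bypasses the Hermitian/anti-Hermitian case analysis entirely; the paper's route is more elementary (no reshaping isomorphism or trace-norm duality) and its lemmas expose structural facts about superpositions of maximally entangled states. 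One caveat worth recording: for $d\times d$ systems your key quantity generalizes to $\frac{1}{d}\bigl(\sum_{k}\sqrt{\lambda_{k}}\bigr)^{2}$, which pins down only one symmetric function of the Schmidt spectrum rather than the spectrum itself, so your argument, like the paper's, does not by itself settle the question raised in Section~\ref{sec-questions} of whether the theorem extends beyond tictacs.
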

\begin{proof}
Maximally entangled states of a pair of qubits are defined as those with
Schmidt parameters $\lambda_{0} = \lambda_{1} = 1/2$.
Theorem~\ref{thm-products} therefore tells us that statement
I implies statement IV.  
It remains to show that statement IV implies statement I.

Any maximally entangled state of two qubits can be written
\begin{equation}
	\ket{\Psi} = \frac{1}{\sqrt{2}} \left ( \ket{0,\psi_{0}} + \ket{1,\psi_{1}} \right ),
\end{equation}
where the qubit states $\ket{\psi_{0}}$ and $\ket{\psi_{1}}$ are orthogonal.
It follows that any two maximally entangled states $\ket{\Psi}$ and
$\ket{\Phi}$ are related by
\begin{equation}
	\ket{\Phi} = \idop \otimes \oper{U}^{\phi}_{\psi} \, \ket{\Psi},
\end{equation}
where $\oper{U}^{\phi}_{\psi} = \outerprod{\phi_{0}}{\psi_{0}} + \outerprod{\phi_{1}}{\psi_{1}}$.
We proceed by proving a pair of useful lemmas about this operator.  First we have
\begin{lemma}  \label{lem-antiherm}
Suppose $\ket{\Psi}$ and $\ket{\Phi}$ are maximally entangled states of two
qubits.  Then $\frac{1}{\sqrt{2}} ( \ket{\Psi} + \ket{\Phi} )$ is a maximally
entangled state if and only if $\oper{U}^{\phi}_{\psi}$ is anti-Hermitian.
\end{lemma}
Expanding the superposition, we see that
\begin{eqnarray}
	\frac{1}{\sqrt{2}} \left ( \ket{\Psi} + \ket{\Phi} \right )
	& = & \frac{1}{2} \big( \ket{0} \otimes ( \ket{\psi_{0}} + \ket{\phi_{0}} )
		\nonumber \\ & & \quad
		+ \ket{1} \otimes ( \ket{\psi_{1}} + \ket{\phi_{1}} )
		\big) .
\end{eqnarray}
This is maximally entangled if and only if $\frac{1}{\sqrt{2}} ( \ket{\psi_{0}} + \ket{\phi_{0}} )$
and $\frac{1}{\sqrt{2}} ( \ket{\psi_{1}} + \ket{\phi_{1}} )$ are orthogonal and normalized.
In other words, the following conditions hold:
\begin{itemize}
	\item  $\amp{\psi_{0}}{\phi_{0}} + \amp{\phi_{0}}{\psi_{0}} = 0$.
	\item  $\amp{\psi_{1}}{\phi_{1}} + \amp{\phi_{1}}{\psi_{1}} = 0$.
	\item  $\amp{\psi_{0}}{\phi_{1}} + \amp{\phi_{0}}{\psi_{1}} = 0$.
\end{itemize}
(The first two arise from normaliztion and the last from orthogonality.)
With respect to the $\{ \ket{\phi_{0}}, \ket{\phi_{1}} \}$ basis, the matrix 
form for $\oper{U}^{\phi}_{\psi}$ is
\begin{equation}
\left ( \oper{U}^{\phi}_{\psi} \right )  = 
	\left ( \begin{array}{cc}  \amp{\psi_{0}}{\phi_{0}} & \amp{\psi_{0}}{\phi_{1}} \\
					\amp{\psi_{1}}{\phi_{0}} & \amp{\psi_{1}}{\phi_{1}}
					\end{array} \right )
\end{equation}
Thus, $\oper{U}^{\phi}_{\psi}$ is anti-Hermitian 
(that is, $\oper{U}^{\phi}_{\psi} + {\oper{U}^{\phi}_{\psi}}^{\dagger} = 0$)
if and only if the three conditions hold.

Our second lemma is closely related.
\begin{lemma}  \label{lem-herm}
Suppose $\ket{\Psi}$ and $\ket{\Phi}$ are orthogonal maximally entangled 
states of two qubits.  If $\oper{U}^{\phi}_{\psi}$ is Hermitian then
$\frac{1}{\sqrt{2}} ( \ket{\Psi} + \ket{\Phi} )$ is a product state.
\end{lemma}

The superposition state is
\begin{eqnarray}
	\frac{1}{\sqrt{2}} ( \ket{\Psi} + \ket{\Phi} ) 
	& = & \ket{0} \otimes \frac{1}{2} \left ( \ket{\psi_{0}} + \ket{\phi_{0}} \right )
		\nonumber \\ & & \quad 
		+ \ket{1} \otimes \frac{1}{2} \left ( \ket{\psi_{1}} + \ket{\phi_{1}} \right )
		\nonumber \\
	& = & \ket{0} \otimes \frac{1}{2} \left ( \idop +  \oper{U}^{\phi}_{\psi} \right ) \ket{\phi_{0}}
		\nonumber \\ & & \quad 
		+ \ket{1} \otimes \frac{1}{2} \left ( \idop +  \oper{U}^{\phi}_{\psi} \right ) \ket{\phi_{1}}
		\nonumber \\
	& = & \ket{0} \otimes \ket{\chi_{0}} + \ket{1} \otimes \ket{\chi_{1}} .
\end{eqnarray}
The operator $\oper{U}^{\phi}_{\psi}$ is both unitary and Hermitian, so its eigenvalues
are all +1 or -1.  The operator $\frac{1}{2} ( \idop + \oper{U}^{\phi}_{\psi} )$ has
eigenvalues +1 or 0, and thus is a projection operator.  We know that this is 
neither the identity nor the zero operator, so it must be that 
\begin{equation}
	\frac{1}{2} ( \idop + \oper{U}^{\phi}_{\psi} ) = \proj{u}
\end{equation}
for some qubit state $\ket{u}$.  The vectors $\ket{\chi_{0}}$ and $\ket{\chi_{1}}$
are both multiples of $\ket{u}$, and 
therefore $\frac{1}{\sqrt{2}} ( \ket{\Psi} + \ket{\Phi} )$ is a product state.

We are now ready to prove Theorem~\ref{thm-maxents} itself.  Let
$\ket{\psi,\phi}$ be an arbitrary product state.  We define maximally
entangled states
\begin{eqnarray}
	\ket{\Gamma} 
	& = & \frac{1}{\sqrt{2}} \left ( \ket{\psi,\phi} + \ket{\psi^{\perp},\phi^{\perp}} \right ) \\
	\ket{\Delta} 
	& = & \frac{1}{\sqrt{2}} \left ( \ket{\psi,\phi} - \ket{\psi^{\perp},\phi^{\perp}} \right ) .
\end{eqnarray}
Clearly $\ket{\psi,\phi} = \frac{1}{\sqrt{2}} ( \ket{\Gamma} + \ket{\Delta} )$.  The
operator $\oper{U}^{\delta}_{\gamma} = \proj{\phi} - \proj{\phi^{\perp}}$ is Hermitian, and
so $i \oper{U}^{\delta}_{\gamma}$ is anti-Hermitian.  Therefore, by 
Lemma~\ref{lem-antiherm}, the state
\begin{equation}
	\frac{1}{\sqrt{2}} \left ( \ket{\Gamma} + (\idop \otimes \oper{U}^{\delta}_{\gamma} ) \ket{\Gamma} \right )
		= \frac{1}{\sqrt{2}} \left ( \ket{\Gamma} + i \ket{\Delta} \right )
\end{equation}
is maximally entangled.

The operator $\oper{U}\sys{12}$ is assumed to take maximally entangled
states to maximally entangled states.  Thus, $\ket{\Gamma'} = \oper{U}\sys{12} \ket{\Gamma}$,
$\ket{\Delta'} = \oper{U}\sys{12} \ket{\Delta}$, and $\frac{1}{\sqrt{2}} ( \ket{\Gamma'} + i \ket{\Delta'} )$
are all maximally entangled.  The operator $i \oper{U}^{\delta'}_{\gamma'}$ must be
anti-Hermitian, which in turn implies that $\oper{U}^{\delta'}_{\gamma'}$ is Hermitian.
Lemma~\ref{lem-herm} tells us that the state
\begin{equation}
	\oper{U}\sys{12} \ket{\psi,\phi} = 
	\frac{1}{\sqrt{2}} \left ( \ket{\Gamma'} + \ket{\Delta'} \right )
\end{equation}
must be a product state.  The operator $\oper{U}\sys{12}$ takes product
states to product states, and therefore (by Theorem~\ref{thm-products})
must be in the meronomic subgroup $M$.
\end{proof}

\end{document}